\documentclass[a4paper,11pt]{article}

\usepackage{amsmath, amssymb, amsthm,complexity}
\usepackage{todonotes}
\usepackage[ruled,vlined,linesnumbered]{algorithm2e}
\usepackage{algorithmic}
\usepackage{comment}

\usepackage{thmtools}
\usepackage{thm-restate}
\usepackage{hyperref}

\usepackage{cleveref}
\usepackage{tikz}
\usepackage{amsmath}
\usetikzlibrary{shapes}
\usetikzlibrary{plotmarks}

\usetikzlibrary{arrows,calc,shapes,decorations.pathreplacing}

\usepackage{wrapfig}
\usepackage{lscape}
\usepackage{rotating}
\usepackage{epstopdf}

\usepackage{microtype}

\usepackage{authblk}

\def\margin{2.9cm}
\usepackage[left=\margin,right=\margin,top=\margin,bottom=\margin]{geometry}

\title{The Parameterized Complexity of Packing Arc-Disjoint Cycles in Tournaments}

\date{}

\author[1]{R. Krithika}
\author[1]{Abhishek Sahu}
\author[1,2]{Saket Saurabh}
\author[3]{Meirav Zehavi}
\affil[1]{The Institute of Mathematical Sciences, HBNI, Chennai, India. \\\texttt{\{rkrithika|asahu|saket\}@imsc.res.in}}

\affil[2]{University of Bergen, Bergen, Norway}
\affil[3]{Ben-Gurion University, Beersheba, Israel. \texttt{meiravze@bgu.ac.il}}


\usepackage{complexity} 
\usepackage{enumerate}
\usepackage{amsmath, amssymb, amsthm,complexity}
\usepackage{euler}
\usepackage{authblk}
\usepackage{todonotes}

\def\margin{2.9cm}
\usepackage[left=\margin,right=\margin,top=\margin,bottom=\margin]{geometry}

\usepackage{thmtools}
\usepackage{thm-restate}
\usepackage{hyperref}
\usepackage{cleveref}

\usepackage{authblk}

\usepackage{todonotes}
\usepackage{microtype}
\usepackage{multirow}
\usepackage{amsmath,amssymb}
\usepackage{comment}
\usepackage{enumerate}

\usepackage{xspace}

\usepackage{listings}
\usepackage{color}
\usepackage{cite}
\usepackage{complexity}

\usepackage{graphicx}
\RequirePackage{fancyhdr}
 \usepackage{xcolor}
\usepackage{boxedminipage}

\theoremstyle{plain}
\newtheorem{theorem}{Theorem}

\newtheorem{lemma}[theorem]{Lemma}

\theoremstyle{definition}
\newtheorem{observation}[theorem]{Observation}
\newtheorem{proposition}[theorem]{Proposition}

\newtheorem{reduction rule}{Reduction Rule}[section]
\newtheorem{definition}{Definition}[section]

\newcommand{\cpt}{\textsc{Arc-Disjoint Cycle Packing in Tournaments}}
\newcommand{\cp}{\textsc{Arc-Disjoint Cycle Packing}}
\newcommand{\vcp}{\textsc{Vertex-Disjoint Cycle Packing}}

\newcommand{\oo}{\mathcal{O}}
\newcommand{\ostar}{\mathcal{O}^\star}
\newcommand{\mc}{\mathcal{C}}
\newcommand{\mf}{\mathcal{F}}

\newcommand{\mi}{\mathcal{I}}
\newcommand{\mj}{\mathcal{J}}

\DeclareMathOperator{\fas}{fas}
\DeclareMathOperator{\first}{first}
\DeclareMathOperator{\second}{mid}
\DeclareMathOperator{\head}{head}
\DeclareMathOperator{\tail}{tail}

\makeatletter
\newcommand{\injective}[2][]{\ext@arrow 0359\rightarrowfill@{#1}{#2}}
\makeatother

\newcommand{\defparprob}[4]{
  \vspace{2mm}
\noindent\fbox{
  \begin{minipage}{0.96\textwidth}
  \begin{tabular*}{\textwidth}{@{\extracolsep{\fill}}lr} \textsc{#1}  & {\bf{Parameter:}} #3
\\ \end{tabular*}
  {\bf{Input:}} #2  \\
  {\bf{Question:}} #4
  \end{minipage}
  }
  \vspace{2mm}
}

\pagestyle{plain}

\bibliographystyle{plain}

\begin{document}

\maketitle

\begin{abstract}
Given a directed graph $D$ on $n$ vertices and a positive integer $k$, the \cp\ problem is to determine whether $D$ has $k$ arc-disjoint cycles. This problem is known to be \W[1]-hard in general directed graphs. In this paper, we initiate a systematic study on the parameterized complexity of the problem restricted to tournaments. We show that the problem is fixed-parameter tractable and admits a polynomial kernel when parameterized by the solution size $k$. In particular, we show that it can be solved in $2^{\oo(k \log k)} n^{\oo(1)}$ time and has a kernel with $\oo(k)$ vertices. The primary ingredient in both these results is a min-max theorem that states that every tournament either contains $k$ arc-disjoint triangles or has a feedback arc set of size at most $6k$. Our belief is that this combinatorial result is of independent interest and could be useful in other problems related to cycles in tournaments.

\end{abstract}

\section{Introduction}
\label{sec:intro}
Given a (directed or undirected) graph $G$ and a positive integer $k$, the \textsc{Disjoint Cycle Packing} problem is to determine whether $G$ has $k$ (vertex or arc/edge) disjoint cycles. Packing disjoint cycles is a fundamental problem in Graph Theory and Algorithm Design with applications in several areas. Since the publication of the classic Erd{\"o}s-P{\'o}sa theorem in 1965~\cite{ErdosPosa}, this problem has received significant scientific attention in various algorithmic realms. In particular, \vcp\ in undirected graphs is one of the first problems studied in the framework of parameterized complexity. In this framework, each problem instance is associated with a non-negative integer $k$ called {\em parameter}, and a problem is said to be {\em fixed-parameter tractable} (\FPT) if it can be solved in $f(k) n^{\oo(1)}$ time for some function $f$, where $n$ is the input size. For convenience, the running time $f(k)n^{\oo(1)}$ where $f$ grows super-polynomially with $k$ is denoted as $\ostar(f(k))$. A {\em kernelization algorithm} is a polynomial-time algorithm that transforms an arbitrary instance of the problem to an equivalent instance of the same problem whose size is bounded by some computable function $g$ of the parameter of the original instance. The resulting instance is called a {\em kernel} and if $g$ is a polynomial function, then it is called a {\em polynomial kernel} and we say that the problem admits a polynomial kernel. Kernelization typically involves applying a set of rules (called {\em reduction rules}) to the given instance to produce another instance. A reduction rule is said to be {\em safe} if it is sound and complete, i.e., applying it to the given instance produces an equivalent instance. In order to classify parameterized problems as being \FPT\ or not, the \W-hierarchy is defined: \FPT\ $\subseteq$ \W$[1] \subseteq$ \W$[2] \subseteq \dots \subseteq$ \XP. It is believed that the subset relations in this sequence are all strict, and a parameterized problem that is hard for some complexity class above \FPT\ in this hierarchy is said to be fixed-parameter intractable. Further details on parameterized algorithms can be found in~\cite{fpt-book,downey,fg}. 

\vcp\ in undirected graphs is \FPT\ with respect to the solution size $k$ \cite{OnDisCycBod,LMSZIcalp17} but has no polynomial kernel unless \NP\ $\subseteq$ co\NP/poly \cite{BTY11}. In contrast, \textsc{Edge-Disjoint Cycle Packing} in undirected graphs admits a kernel with $\oo(k \log k)$ vertices (and is therefore \FPT) \cite{BTY11}. On directed graphs, \vcp\ and \cp\ are equivalent and turn out to be \W[1]-hard \cite{vcp-red-ecp-red-vc, dag-edp}. Therefore, studying these problems on a subclass of directed graphs is a natural direction of research. Tournaments form a mathematically rich subclass of directed graphs with interesting structural and algorithmic properties \cite{gutin,moon}. A {\em tournament} is a directed graph in which there is a single arc between every pair of distinct vertices. Tournaments have several applications in modeling round-robin tournaments and in the study of voting theory and social choice theory. Further, the combinatorics of inclusion relations of tournaments is reasonably well-understood \cite{wqo-tour}. A seminal result in the theory of undirected graphs is the Graph Minor Theorem (also known as the Robertson and Seymour theorem) that states that undirected graphs are well-quasi-ordered under the {\em minor relation} \cite{gmt}. Developing a similar theory of inclusion relations of directed graphs has been a long-standing research challenge. However, there is such a result known for tournaments that states that tournaments are well-quasi-ordered under the {\em strong immersion relation} \cite{wqo-tour}. In fact, this result also holds for a superclass of tournaments, namely, semicomplete digraphs \cite{wqo-semi}. A {\em semicomplete digraph} is a directed graph in which there is at least one arc between every pair of distinct vertices. Many results (including some of the ones described in this work) for tournaments straightaway hold for semicomplete digraphs too. This is another reason why tournaments is one of the most well-studied classes of directed graphs. 

\textsc{Feedback Vertex Set} and \textsc{Feedback Arc Set} are two well-explored algorithmic problems on tournaments. A {\em feedback vertex (arc) set} is a set of vertices (arcs) whose deletion results in an acyclic graph. Given a directed graph and a positive integer $k$, \textsc{Feedback Arc (Vertex) Set} is the problem of determining if the graph has a set of at most $k$ arcs (vertices) whose deletion results in an acyclic graph. These problems are the dual problems of \vcp\ and \cp, respectively. They are known to be \NP-hard on tournaments \cite{fast-hard-alon,fast-hard,fast-hard3,speck} but are \FPT\ when parameterized by $k$ \cite{fastfast,karpinski,loc-ser,vr-ss}. Further, \textsc{Feedback Arc Set in Tournaments} has a kernel with $\oo(k)$ vertices \cite{jcss11} and \textsc{Feedback Vertex Set in Tournaments} has a kernel with $\oo(k^{1.5})$ vertices \cite{implicit-hs}. Though \textsc{Feedback Arc Set} and \textsc{Feedback Vertex Set} are intensively studied in tournaments, their duals have surprisingly not been considered in the literature until recently \cite{cpt-esa17,implicit-hs}. Any tournament that has a cycle also has a triangle \cite{digraphs}. Therefore, if a tournament has $k$ vertex-disjoint cycles, then it also has $k$ vertex-disjoint triangles. Thus, \vcp\ in tournaments is just packing vertex-disjoint triangles. A straightforward application of the {\em color coding} technique shows that this problem is \FPT\ and a kernel with $\oo(k^2)$ vertices is an immediate consequence of the quadratic element kernel known for \textsc{3-Set Packing} \cite{3-set-packing}. Recently, a kernel with $\oo(k^{1.5})$ vertices was shown for this problem using interesting variants and generalizations of the popular {\em expansion lemma} \cite{implicit-hs}. 

Focusing on \cp\ in tournaments, it is easy to verify that a tournament that has $k$ arc-disjoint cycles need not necessarily have $k$ arc-disjoint triangles. This observation hints that packing arc-disjoint cycles could be significantly harder than packing vertex-disjoint cycles. This is the starting point of our study. In this paper, we investigate the parameterized complexity of \cp\ in tournaments.

\defparprob
{\cpt ~(ACT)}
{A tournament $T$ and a positive integer $k$.}
{$k$}
{Do there exist $k$ arc-disjoint cycles in $T$?}

We show that \textsc{ACT} is \FPT\ and admits a polynomial kernel. En route, we discover an interesting min-max relation analogous to the classical Erd{\"o}s-P{\'o}sa theorem. In particular, we show the following results.
\begin{itemize}
\item A tournament $T$ has $k$ arc-disjoint cycles if and only if $T$ has $k$ arc-disjoint cycles each of length at most $2k+1$ (Lemma \ref{lem:short-cycle}).
\item Every tournament $T$ either contains $k$ arc-disjoint triangles or has a feedback arc set of size at most $6(k-1)$ (Theorem \ref{thm:lin-ep}).
\item \textsc{ACT} admits a kernel with $\oo(k)$ vertices (Theorem \ref{thm:cp-linear-kernel}).
 \item \textsc{ACT} can be solved in $\ostar(2^{\oo(k \log k)})$ time (Theorem \ref{thm:reduc-cp-edp}).
\end{itemize}
The paper is organized as follows. In Section \ref{sec:prelim}, we give some definitions related to directed graphs, cycles and tournaments. In Section \ref{sec:erdosposa}, we show the first two combinatorial results about tournaments. In Section \ref{sec:fpt-ker1}, we show that \textsc{ACT} is \FPT\ and admits a polynomial kernel. In Section \ref{sec:fpt-ker2}, we describe an improved \FPT\ algorithm and a linear vertex kernel for \textsc{ACT}. Finally, we conclude with some remarks in Section \ref{sec:concl}.

\section{Preliminaries}
\label{sec:prelim}
The set $\{1,2,\dots,n\}$ is denoted by $[n]$. A {\em directed graph} (or {\em digraph}) is a pair consisting of a set $V$ of vertices and a set $A$ of arcs. An arc is specified as an ordered pair of vertices. We will consider only simple unweighted digraphs. For a digraph $D$, $V(D)$ and $A(D)$ denote the set of its vertices and the set of its arcs, respectively. Two vertices $u$, $v$ are said to be {\em adjacent}  in $D$ if $(u,v) \in A(D)$ or $(v,u) \in A(D)$. For an arc $e=(u,v)$, $\head(e)$ denotes $v$ and $\tail(e)$ denotes $u$. For a vertex $v \in V(D)$, its {\em out-neighborhood}, denoted by $N^{+}(v)$, is the set $\{u \in V(D) \mid (v,u) \in A(D)\}$ and its {\em in-neighborhood}, denoted by $N^{-}(v)$, is the set $\{u \in V(D) \mid (u,v) \in A(D)\}$. For a set of arcs $F$, $V(F)$ denotes the union of the sets of endpoints of arcs in $F$. For a set $X \subseteq V(D) \cup E(D)$, $D-X$ denotes the digraph obtained from $D$ by deleting $X$. 

A {\em path} $P$ in $D$ is a sequence $(v_1,\dots,v_k)$ of distinct vertices such that for each $i \in [k-1]$, $(v_i,v_{i+1}) \in A(D)$. The set $\{v_1,\dots,v_k\}$ is denoted by $V(P)$ and the set $\{(v_i,v_{i+1}) \mid i \in [k-1]\}$ is denoted by $A(P)$. A path $P$ is called an {\em induced} (or {\em chordless}) path if there is no arc in $D$ that is between two non-consecutive vertices of $P$. A {\em cycle} $C$ in $D$ is a sequence $(v_1,\dots,v_k)$ of distinct vertices such that $(v_1,\dots,v_k)$ is a path and $(v_k,v_1) \in A(D)$. The set $\{v_1,\dots,v_k\}$ is denoted by $V(C)$ and the set $\{(v_i,v_{i+1}) \mid i \in [k-1]\} \cup \{(v_k,v_1)\}$ is denoted by $A(C)$. A cycle $C=(v_1,\dots,v_k)$ is called an {\em induced} (or {\em chordless}) cycle if there is no arc in $D$ that is between two non-consecutive vertices of $C$ with the exception of the arc $(v_k,v_1)$. The length of a path or cycle $X$ is the number of vertices in it and is denoted by $|X|$. For a set $\mc$ of paths or cycles, $A(\mc)$ denotes the set $\{e \in A(D) \mid \exists C \in \mc, e \in A(C)\}$. A cycle on three vertices is called a {\em triangle}. A digraph is said to be {\em triangle-free} if it has no triangles.

A digraph is called a {\em directed acyclic graph} if it has no cycles. Any directed acyclic graph $D$ has an ordering $\sigma$ called {\em topological ordering} of its vertices such that for each $(u,v) \in A(D)$, $\sigma(u)<\sigma(v)$ holds. A {\em feedback vertex (arc) set} is a set of vertices (arcs) whose deletion results in an acyclic graph. For a digraph $D$, let $\fas(D)$ denote the size of a minimum feedback arc set of $D$. A {\em tournament} $T$ is a digraph in which for every pair $u,v$ of distinct vertices either $(u,v) \in A(T)$ or $(v,u) \in A(T)$ but not both. A tournament is called {\em transitive} if it is a directed acyclic graph and a transitive tournament has a unique topological ordering.

\section{An Erd{\"o}s-P{\'o}sa Type Theorem}
\label{sec:erdosposa}
The classical Erd{\"o}s-P{\'o}sa theorem for cycles in undirected graphs states that there exists a function $f(k)=\oo(k\log k)$ such that for each non-negative integer $k$, every undirected graph either contains $k$ vertex-disjoint cycles or has a feedback vertex set consisting of $f(k)$ vertices \cite{ErdosPosa}. An interesting consequence of this theorem is that it leads to an \FPT\ algorithm for \vcp. It is well known that the treewidth ($\mathrm{\it tw}$) of a graph is not larger than the size of its feedback vertex set, and that a naive dynamic programming scheme solves \vcp\ in $2^{\oo(\mathrm{\it tw}\log \mathrm{\it tw})} n$ time and exponential space (see, e.g., \cite{fpt-book}). Thus, the existence of a $2^{\oo(k\log^2k)} n$ time algorithm that uses exponential space can be viewed as a direct consequence of the Erd{\"o}s-P{\'o}sa theorem (see \cite{LMSZIcalp17} for more details).  

In this section, we show that that there exists a function $h(k)$ such that for each non-negative integer $k$, every tournament either contains $k$ arc-disjoint cycles or has a feedback arc set consisting of $h(k)$ arcs. First, we show that $h(k)$ is $\oo(k^2)$ and then improve it to $\oo(k)$. The following result is crucial in proving the former. 

\begin{lemma}
\label{lem:short-cycle}
Let $k$ and $r$ be positive integers such that $r \leq k$. If a tournament $T$ contains a set $\mc$ of $r$ arc-disjoint cycles, then it also contains a set $\mc^*$ of $r$ arc-disjoint cycles each of length at most $2k+1$.
\end{lemma}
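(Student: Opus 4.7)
The plan is to take $\mc$ to be a set of $r$ arc-disjoint cycles of $T$ that minimises the total length $\sum_{C \in \mc} |C|$, and to show that this extremal choice automatically satisfies $|C| \leq 2k+1$ for every $C \in \mc$; the set $\mc^*$ is then simply $\mc$. Suppose, for contradiction, that some $C = (v_1, v_2, \ldots, v_\ell) \in \mc$ has $\ell \geq 2k+2$. I will produce another cycle $C^*$ in $T$ with $|C^*| < \ell$ that is arc-disjoint from $\mc \setminus \{C\}$, so that $(\mc \setminus \{C\}) \cup \{C^*\}$ is a set of $r$ arc-disjoint cycles with strictly smaller total length, contradicting the minimality of $\mc$.

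The cycle $C^*$ will come from a single-vertex shortcut at $v_1$. Consider the $\ell - 3$ arcs of $T$ between $v_1$ and $\{v_3, v_4, \ldots, v_{\ell - 1}\}$; call them the chords of $C$ at $v_1$. The key observation is that any cycle $C' \in \mc \setminus \{C\}$ visits $v_1$ at most once, so it uses at most two arcs incident to $v_1$ in total, and by arc-disjointness from $C$ it cannot use the two $C$-arcs $(v_\ell, v_1)$ and $(v_1, v_2)$. Hence at most $2(r-1) \leq 2(k-1) = 2k - 2$ of the chords at $v_1$ are occupied by other cycles, which is strictly less than $\ell - 3 \geq 2k - 1$. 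Therefore some chord, say between $v_1$ and $v_i$ for some $i \in \{3, \ldots, \ell - 1\}$, is free of every cycle in $\mc \setminus \{C\}$. Depending on its orientation I take $C^* = (v_1, v_2, \ldots, v_i, v_1)$ of length $i$ if the arc is $(v_i, v_1)$, or $C^* = (v_1, v_i, v_{i+1}, \ldots, v_\ell, v_1)$ of length $\ell - i + 2$ if the arc is $(v_1, v_i)$; both lengths lie in $[3, \ell - 1]$, and every arc of $C^*$ other than the free chord is an arc of $C$, which gives the required arc-disjointness from $\mc \setminus \{C\}$.

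The only delicate point is the degree bookkeeping at $v_1$: one must carefully separate the two $C$-arcs at $v_1$ (off-limits to other cycles by arc-disjointness) from the $\ell - 3$ chords of $C$ at $v_1$, and use the fact that a simple cycle uses exactly two arcs at any vertex it visits. Everything then rests on the pigeonhole inequality $\ell - 3 > 2(r-1)$, which is precisely what the hypotheses $r \leq k$ and $\ell \geq 2k+2$ provide; this is also what makes $2k+1$ the natural bound delivered by this particular shortening argument, suggesting it is essentially tight for this approach.
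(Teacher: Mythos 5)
Your proof is correct and follows essentially the same route as the paper's: both take a set of $r$ arc-disjoint cycles minimising total length and argue that, by minimality, every chord of an overly long cycle $C$ must be occupied by one of the other cycles, then derive $\ell \le 2k+1$ by counting. The only difference is that you count the $\ell-3$ chords incident to a single vertex (at most $2$ per other cycle), whereas the paper counts all $\binom{\ell}{2}-\ell$ chords of $C$ (at most $\ell$ per other cycle); both counts yield the same bound.
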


\begin{proof}
Let $\mc$ be a set of $r$ arc-disjoint cycles in $T$ that minimizes $\sum_{C \in \mc} |C|$. If every cycle in $\mc$ is a triangle, then the claim trivially holds. Otherwise, let $C$ be a longest cycle in $\mc$ and let $\ell$ denote its length. Let $v_i,v_j$ be a pair of non-consecutive vertices in $C$. Then, either $(v_i,v_j) \in A(T)$ or $(v_j,v_i) \in A(T)$. In any case, the arc $e$ between $v_i$ and $v_j$ along with $A(C)$ forms a cycle $C'$ of length less than $\ell$ with $A(C') \setminus \{e\} \subset A(C)$.  By our choice of $\mc$, this implies that $e$ is an arc in some other cycle $\widehat{C} \in \mc$. This property is true for the arc between any pair of non-consecutive vertices in $C$. Therefore, we have ${\ell \choose 2}-\ell \leq \ell(k-1)$ leading to $\ell \leq 2k+1$.
\end{proof}

This lemma essentially shows that it suffices to determine the existence of $k$ arc-disjoint cycles in $T$ each of length at most $2k+1$ in order to determine if $(T,k)$ is a yes-instance of \textsc{ACT}. This leads to the following quadratic Erd{\"o}s-P{\'o}sa bound. Recall that for a digraph $D$, $\fas(D)$ denotes the size of a minimum feedback arc set of $D$. 

\begin{theorem}
\label{thm:quad-ep}
For every non-negative integer $k$, every tournament $T$ either contains $k$ arc-disjoint cycles or has a feedback arc set of size $\oo(k^2)$.
\end{theorem}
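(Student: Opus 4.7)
The plan is to combine Lemma~\ref{lem:short-cycle} with the elementary observation that the union of the arcs of a maximum arc-disjoint cycle packing is a feedback arc set.

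First, suppose that $T$ does not contain $k$ arc-disjoint cycles, and let $\mc$ be a maximum collection of arc-disjoint cycles in $T$. Set $r = |\mc|$, so $r \le k-1$. I would invoke Lemma~\ref{lem:short-cycle} on $\mc$ to replace it by a set $\mc^*$ of $r$ arc-disjoint cycles each of length at most $2k+1$. Note that $\mc^*$ is still a maximum arc-disjoint cycle packing (since $|\mc^*|=|\mc|=r$).

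Next I would argue that $A(\mc^*)$ is a feedback arc set of $T$. Indeed, if $T - A(\mc^*)$ contained a cycle $C$, then $\mc^* \cup \{C\}$ would be a collection of $r+1$ arc-disjoint cycles in $T$, contradicting the maximality of $\mc$ (and hence of $\mc^*$). Finally, bounding the total number of arcs in $\mc^*$ gives
\[
|A(\mc^*)| \;\le\; \sum_{C \in \mc^*} |C| \;\le\; r(2k+1) \;\le\; (k-1)(2k+1) \;=\; \oo(k^2),
\]
which is the desired feedback arc set.

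There is no real obstacle here beyond making sure that Lemma~\ref{lem:short-cycle} applies: its hypothesis requires $r \le k$, which is satisfied since we have assumed the negation of the first alternative, namely $r \le k-1$. The only mildly subtle point is that the ``short cycle'' replacement $\mc^*$ is itself maximum, so that deleting its arcs indeed breaks all remaining cycles; this is immediate from $|\mc^*| = |\mc|$.
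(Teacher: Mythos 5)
Your proof is correct and follows essentially the same route as the paper: apply Lemma~\ref{lem:short-cycle} to a largest arc-disjoint cycle packing and observe that the arcs of that packing form a feedback arc set of total size at most $(k-1)(2k+1)=\oo(k^2)$. Your use of a \emph{maximum} (rather than merely maximal) packing to guarantee that the shortened family $\mc^*$ still meets every cycle is a small but welcome tightening of the paper's wording.
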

\begin{proof}
Suppose $\mc$ is a maximal set of arc-disjoint cycles in $T$. If $|\mc| \geq k$, then the claim holds. Otherwise, from Lemma \ref{lem:short-cycle}, we may assume that each cycle in $\mc$ is of length at most $2k+1$. Let $D$ denote the digraph obtained from $T$ by deleting the arcs that are in some cycle in $\mc$. Clearly, $D$ is acyclic as $\mc$ is maximal. Then, it follows that $\fas(T) \leq (2k+1)(k-1)$. 
\end{proof}

Next, we strengthen this result to arrive at a linear min-max bound. We will use the following lemma in the process. For a digraph $D$, let $\Lambda(D)$ denote the number of non-adjacent pairs of vertices in $D$. That is, $\Lambda(D)$ is the number of pairs $u,v$ of vertices of $D$ such that neither $(u,v) \in A(D)$ nor $(v,u) \in A(D)$. 

 
\begin{lemma}
\label{lem:erdos-posa}
Let $D$ be a triangle-free digraph in which for every pair $u,v$ of distinct vertices, at most one of $(u,v)$ or $(v,u)$ is in $A(D)$. Then, $\fas(D) \leq \Lambda(D)$.
\end{lemma}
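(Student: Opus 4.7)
The plan is to induct on $\Lambda(D)$. When $\Lambda(D) = 0$, the digraph $D$ is a tournament, and the classical shortcutting argument (take a shortest directed cycle $v_1 \to \cdots \to v_k \to v_1$ of length $k \geq 4$ in a tournament; the arc between $v_1$ and $v_3$ either gives a shorter cycle via $(v_1,v_3)$ or the directed triangle $v_1 \to v_2 \to v_3 \to v_1$ via $(v_3,v_1)$) shows that a triangle-free tournament is acyclic, so $\fas(D) = 0$.

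For the inductive step, I pick any non-adjacent pair $\{u,v\}$ in $D$. If $D + (u,v)$ is triangle-free, let $D'$ denote this graph; then $\Lambda(D') = \Lambda(D) - 1$, and by the inductive hypothesis $\fas(D') \leq \Lambda(D) - 1$. Since every feedback arc set of $D'$ is a feedback arc set of $D$ (it breaks all cycles of $D' \supseteq D$), this yields $\fas(D) \leq \fas(D') \leq \Lambda(D) - 1$. The case where $D + (v,u)$ is triangle-free is symmetric.

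The delicate ``locked'' case is when both $D + (u,v)$ and $D + (v,u)$ contain a directed triangle. Then there must exist vertices $x, y$ with $(v,x), (x,u), (u,y), (y,v) \in A(D)$, forming a directed 4-cycle $u \to y \to v \to x \to u$; moreover $x \ne y$ and $\{x,y\}$ is itself a non-adjacent pair (otherwise an arc between $x$ and $y$ would close a directed triangle together with two arcs of the 4-cycle). To handle this, I would pivot to a vertex-removal step: remove $v$ from $D$, apply the inductive hypothesis to the triangle-free oriented graph $D - v$ to obtain an ordering of $V(D) \setminus \{v\}$ with at most $\Lambda(D - v) = \Lambda(D) - k_v$ backward arcs (where $k_v$ denotes the number of non-neighbors of $v$), and re-insert $v$ at a position minimizing the number of new backward arcs. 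The key structural use of triangle-freeness here is that there is no arc from $N^+(v)$ to $N^-(v)$, which allows $v$ to be placed essentially ``between'' its in- and out-neighbors so that the new backward arcs introduced are bounded by $k_v$, closing the bound $\fas(D) \leq \Lambda(D)$.

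The main obstacle will be making this charging quantitatively precise in the locked case. An arbitrary optimal ordering $\sigma'$ of $D - v$ need not place $N^-(v)$ before $N^+(v)$, and if it interleaves them, naive re-insertion may introduce more than $k_v$ new backward arcs. Overcoming this likely requires either choosing $\sigma'$ compatibly with the partition of $V(D) \setminus \{v\}$ into $N^-(v)$, the non-neighbors of $v$, and $N^+(v)$ (absorbing the extra cost into the inductive bound), or a global accounting that simultaneously handles the two forced non-adjacent pairs $\{u,v\}$ and $\{x,y\}$ coming from the 4-cycle.
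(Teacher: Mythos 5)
Your base case and your ``unlocked'' case are fine: if some non-adjacent pair $\{u,v\}$ admits an orientation whose addition keeps $D$ triangle-free, then induction on $\Lambda$ plus the fact that a feedback arc set of a supergraph is one of $D$ closes that case, and your derivation of the forced $4$-cycle $u\to y\to v\to x\to u$ with $\{x,y\}$ non-adjacent in the locked case is also correct. But the locked case is where the entire difficulty of the lemma lives, and your plan for it has a genuine gap: the claim that $v$ can be re-inserted into an inductively obtained ordering of $D-v$ at the cost of at most $k_v$ new backward arcs is false in general. Take $v$ with in-neighbours $a_1,\dots,a_m$, out-neighbours $b_1,\dots,b_m$, a single non-neighbour (so $k_v=1$), and no arcs between the $a_i$'s and $b_j$'s; then $b_1,a_1,b_2,a_2,\dots,b_m,a_m$ is a perfectly valid optimal ordering of $D-v$, and every insertion position for $v$ creates at least $m$ backward arcs among the arcs $a_j\to v$ and $v\to b_i$. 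The triangle-freeness fact you invoke (no arc from $N^{+}(v)$ to $N^{-}(v)$) does not prevent the inductive ordering from interleaving $N^{-}(v)$ and $N^{+}(v)$, precisely because non-adjacent pairs impose no ordering constraint. You acknowledge this obstacle yourself, but the two remedies you sketch (choosing $\sigma'$ compatibly with the partition, or a global accounting) are exactly the unproved content, so the proof is not complete.

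What is missing is a reason why some vertex has small re-insertion cost, and the correct budget for that cost is not $k_v$ but the number of non-adjacent pairs straddling $N^{-}(v)$ and $N^{+}(v)$, which is what gets ``freed up'' when you separate those two sides. This is how the paper proceeds: it inducts on $|V(D)|$, defines $\first(v)$ and $\second(v)$ as the numbers of induced directed $2$-paths having $v$ as first, respectively middle, vertex, and by double counting finds a vertex $u$ with $\first(u)\le\second(u)$. It then splits $D$ into $D[N^{-}(u)\cup R_u]$ and $D[N^{+}(u)]$; triangle-freeness forces every arc from the second part back to the first to go from $N^{+}(u)$ into $R_u$, and there are exactly $\first(u)$ such arcs, which are paid for by the $\second(u)$ non-adjacent pairs between $N^{-}(u)$ and $N^{+}(u)$ that are lost in the split. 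If you want to salvage your ordering-based formulation, you would need an analogous averaging step to select the pivot vertex and to charge the insertion cost against those straddling non-adjacent pairs rather than against $k_v$; picking an arbitrary vertex of an arbitrary locked pair will not work.
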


\begin{proof}
We will prove the claim by induction on $|V(D)|$. The claim trivially holds for $|V(D)| \leq 2$. Suppose $|V(D)| \geq 3$. First, we apply a simple preprocessing rule on $D$. If $D$ has a vertex $v$ that either has no in-neighbours or has no out-neighbours, then we delete $v$ from $D$ to get the digraph $D'$. Clearly, there is no cycle in $D$ that contains $v$ and thus $\fas(D)=\fas(D')$. Therefore, subsequently, we may assume that for every vertex $v \in V(D)$, $N^{+}(v)\neq \emptyset$ and $N^{-}(v)\neq \emptyset$.

For a vertex $v \in V(D)$, we define $\first(v)$ to be the number of induced paths of length 3 with $v$ as the first vertex. Similarly, we define $\second(v)$ to be the number of induced paths of length 3 with $v$ as the second vertex. We claim that $\sum_{v \in V(D)}\first(v)=\sum_{v \in V(D)} \second(v)$. Consider an induced path $P=(u,v,w)$ of length 3 in $D$. Then, $P$ contributes 1 to $\first(u)$ and does not contribute to $\first(x)$ for any $x \neq u$. Further, $P$ contributes 1 to $\second(v)$ and does not contribute to $\second(x)$ for any $x \neq v$. Therefore, $P$ contributes 1 to $\sum_{v \in V(D)}\first(v)$ and $\sum_{v \in V(D)} \second(v)$. Hence, $\sum_{v \in V(D)}\first(v)=\sum_{v \in V(D)} \second(v)$. It now follows that there is a vertex $u \in V(D)$ such that $\first(u) \leq \second(u)$. 

Define the sets $I_u$, $O_u$ and $R_u$ as $I_u=N^{-}(v)$, $O_u=N^{+}(v)$ and $R_u=V(D)\setminus (I_u \cup O_u)$. That is, $I_u$ is the set of in-neighbours of $u$, $O_u$ is the set of out-neighbours of $u$ and $R_u$ is the set of vertices that are not adjacent with $u$. Observe that $I_u\neq \emptyset$ and $O_u\neq \emptyset$. Let $D_1$ and $D_2$ be the subgraphs $D[I_u \cup R_u]$ and $D[O_u]$, respectively. Then, as $D_1$ and $D_2$ are vertex-disjoint induced subgraphs of $D$, we have $\Lambda(D)\geq \Lambda(D_1)+\Lambda(D_2)$. Now, any induced path $P=(x,u,y)$ of length 3 in $D$ with $u$ as the second vertex satisfies the property that $x \in V(D_1)$ and $y \in V(D_2)$. Further, $(x,y),(y,x) \notin A(D)$ due to the facts that $P$ is an induced path and $D$ is triangle-free. Then, $\second(u)$ is the number of pairs $x,y$ of non-adjacent vertices with $x \in I_u,y \in O_u$. Therefore, $\Lambda(D)\geq \Lambda(D_1)+\Lambda(D_2)+ \second(u)$ as $I_u \subseteq V(D_1)$, $O_u \subseteq V(D_2)$ and $V(D_1) \cap V(D_2)=\emptyset$. 

Let $E$ denote the set of arcs $(x,y)$ in $D$ with $x \in O_u$ and $y \in R_u$. Let $F_1$ and $F_2$ be feedback arc sets of $D_1$ and $D_2$, respectively. We  claim that $F=F_1 \cup F_2 \cup E$ is a feedback arc set of $D$. If there is a cycle $C$ in the graph obtained from $D$ by removing arcs in $F$, then $C$ has an arc $(p,q)$ with $p \in V(D_1), q \in V(D_2)$ and an arc $(r,s)$ with $r \in V(D_2), s \in V(D_1)$. However, as $D$ has no triangle, any arc $(x,y)$ with $x \in V(D_2)$ and $y \in V(D_1)$ satisfies $x \in O_u$ and $y \in R_u$. That is, $(r,s) \in E$ leading to a contradiction. Therefore, it follows that $\fas(D) \leq \fas(D_1)+\fas(D_2)+|E|$. Note that $|E| = \first(u)$ as any induced path $P=(u,v,w)$ starting at $u$ satisfies $v \in O_u$, $w \in R_u$ and any arc $(x,y)$ with $x \in O_u, y\in R_u$ corresponds to an induced path $(u,x,y)$ starting at $u$. Also, by the choice of $u$, we have $\first(u) \leq \second(u)$. Therefore, $\fas(D) \leq \fas(D_1)+\fas(D_2)+\second(u)$. By induction hypothesis, we have $\fas(D_1) \leq \Lambda(D_1)$ and $\fas(D_2) \leq \Lambda(D_2)$. Hence, $\fas(D) \leq \Lambda(D_1)+\Lambda(D_2)+\second(u)$. As $\Lambda(D)\geq \Lambda(D_1)+\Lambda(D_2)+ \second(u)$, we have $\fas(D) \leq \Lambda(D)$.
\end{proof}

This leads to the following main result of this section.
\begin{theorem}
\label{thm:lin-ep}
For every non-negative integer $k$, every tournament $T$ either contains $k$ arc-disjoint triangles or has a feedback arc set of size at most $6(k-1)$ that can be obtained in polynomial time.
\end{theorem}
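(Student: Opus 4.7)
The plan is to combine a greedy maximal triangle-packing step with \cref{lem:erdos-posa}. First, I would greedily compute a maximal set $\mc$ of arc-disjoint triangles in $T$ in polynomial time (repeatedly find any triangle, remove its arcs, and iterate). If $|\mc| \geq k$, we already exhibit $k$ arc-disjoint triangles and are done. Otherwise $|\mc| \leq k-1$, and we shift to proving the feedback arc set bound.

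Let $D$ be the digraph obtained from $T$ by deleting the $3|\mc|$ arcs used by triangles in $\mc$. Maximality of $\mc$ ensures that $D$ is triangle-free, and being a subdigraph of the tournament $T$, $D$ still has at most one arc between any pair of vertices. So $D$ satisfies the hypotheses of \cref{lem:erdos-posa}. The key counting observation is that a pair $\{u,v\}$ is non-adjacent in $D$ if and only if the unique $T$-arc between $u$ and $v$ was removed, i.e., lies in some triangle of $\mc$. Since the triangles in $\mc$ are arc-disjoint and each contributes exactly three arcs (hence three distinct pairs), we obtain $\Lambda(D) = 3|\mc| \leq 3(k-1)$. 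Applying \cref{lem:erdos-posa} gives $\fas(D) \leq \Lambda(D) \leq 3(k-1)$.

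To assemble a feedback arc set of $T$, take any feedback arc set $F^{*}$ of $D$ with $|F^{*}| \leq 3(k-1)$, and set $F = F^{*} \cup A(\mc)$. Then $T - F$ is a subdigraph of $D - F^{*}$, which is acyclic by choice of $F^{*}$, so $F$ is a feedback arc set of $T$. Its size is at most $3(k-1) + 3|\mc| \leq 3(k-1) + 3(k-1) = 6(k-1)$, as required.

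The remaining obstacle is turning the inductive proof of \cref{lem:erdos-posa} into a polynomial-time procedure for producing $F^{*}$. Fortunately, the induction is constructive and each recursive step is polynomially implementable: the preprocessing deletion of sources/sinks is straightforward; the values $\first(v)$ and $\second(v)$ for all $v$ can be computed in polynomial time by enumerating ordered triples; a vertex $u$ with $\first(u) \leq \second(u)$ must exist by the averaging argument in the lemma; and the recursion is applied to the vertex-disjoint induced subdigraphs $D_1 = D[I_u \cup R_u]$ and $D_2 = D[O_u]$, each with strictly fewer vertices. Combining the feedback arc sets returned by the two recursive calls with the set $E$ of arcs from $O_u$ to $R_u$ (whose size equals $\first(u)$) yields $F^{*}$ in overall polynomial time. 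Adding $A(\mc)$ completes the construction.
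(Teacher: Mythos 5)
Your proposal is correct and follows essentially the same route as the paper's own proof: take a maximal arc-disjoint triangle packing, observe that the residual digraph is triangle-free with $\Lambda(D)\leq 3(k-1)$, apply Lemma~\ref{lem:erdos-posa}, and add back $A(\mc)$ to form the feedback arc set of size at most $6(k-1)$. Your extra remarks on making the induction of Lemma~\ref{lem:erdos-posa} algorithmic are a welcome elaboration of the ``polynomial time'' claim, which the paper leaves implicit.
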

\begin{proof}
Suppose $\mc$ is a maximal set of arc-disjoint triangles in $T$ with $|\mc| \leq k-1$. Let $D$ denote the digraph obtained from $T$ by deleting the arcs that are in some triangle in $\mc$. Clearly, $D$ has no triangle and $\Lambda(D) \leq 3(k-1)$. From Lemma \ref{lem:erdos-posa}, we have $\fas(D) \leq 3(k-1)$. Also, if $F$ is a feedback arc set of $D$, then $F \cup A(\mc)$ is a feedback arc set of $T$. Therefore, $\fas(T) \leq 6(k-1)$.
\end{proof}

We will use this result crucially in showing that \textsc{ACT} can be solved in $\ostar(2^{\oo(k \log k)})$ time and admits a kernel with $\oo(k)$ vertices.

\section{Fixed-Parameter Tractability and Kernelization Complexity}
\label{sec:fpt-ker1}
In this section, we show that \textsc{ACT} is \FPT\ and admits a polynomial kernel. We show that the first result is a direct consequence of Lemma \ref{lem:short-cycle} and the second follows from Theorem \ref{thm:lin-ep}.
\subsection{An FPT Algorithm}
Consider an instance $\mi=(T,k)$ of \textsc{ACT}. Let $n$ denote $|V(T)|$ and $m$ denote $|A(T)|$. Suppose $\mi$ is a yes-instance and $\mc$ is a set of $k$ arc-disjoint cycles in $T$. From Lemma \ref{lem:short-cycle},  we may assume that the total number of arcs that are in cycles in $\mc$ is at most $(2k+1)k$. Using this observation, we proceed as follows. We color the arcs of $T$ uniformly at random from the color set $[\ell]$ where $\ell=2k^2+k$. Let $\chi:A(T) \rightarrow [\ell]$ denote this coloring. 

\begin{proposition}[\cite{color-coding}]
\label{prop:colorful}
If $E$ is a subset of $A(T)$ of size $\ell$, then the probability that the arcs in $E$ are colored with pairwise distinct colors is at least $e^{-\ell}$.
\end{proposition}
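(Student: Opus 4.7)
The plan is a direct probabilistic calculation. Since $\chi$ assigns each arc of $T$ a color from $[\ell]$ independently and uniformly at random, the colors assigned to the $\ell$ arcs of $E$ form a uniformly random function from $E$ to $[\ell]$. The sample space therefore has $\ell^{|E|} = \ell^{\ell}$ equally likely outcomes. Among these, the outcomes in which the arcs of $E$ receive pairwise distinct colors are in bijection with the bijections from $E$ to $[\ell]$, of which there are exactly $\ell!$. Hence the event we care about has probability exactly $\ell!/\ell^{\ell}$.

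It remains to lower bound $\ell!/\ell^{\ell}$ by $e^{-\ell}$. I would invoke Stirling's inequality in the standard form $\ell! \geq \sqrt{2\pi \ell}\,(\ell/e)^{\ell}$, valid for every positive integer $\ell$. Dividing both sides by $\ell^{\ell}$ gives
\[
\frac{\ell!}{\ell^{\ell}} \;\geq\; \sqrt{2\pi \ell}\cdot e^{-\ell} \;\geq\; e^{-\ell},
\]
as $\sqrt{2\pi \ell} \geq 1$ for all $\ell \geq 1$, which completes the argument. (For $\ell = 0$ the claim is vacuous since $E = \emptyset$ is trivially colored with distinct colors.)

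There is essentially no obstacle here: the statement is a textbook fact from the color coding toolbox, and the only care needed is to match $|E|$ with the size of the color palette so that the favorable colorings correspond exactly to bijections, justifying the $\ell!$ in the numerator rather than a falling factorial $\ell!/(\ell - |E|)!$. If one preferred to avoid Stirling entirely, the same bound can be obtained from $\ell!/\ell^{\ell} = \prod_{i=1}^{\ell}(i/\ell)$ together with the elementary estimate $\prod_{i=1}^{\ell}(i/\ell) \geq e^{-\ell}$, which follows from $\ln(x) \geq -(1/x - 1)\cdot \ell$-style manipulations; but Stirling is cleaner and is the route I would take.
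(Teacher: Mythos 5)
Your proposal is correct. The paper gives no proof of this proposition---it is imported verbatim from the color-coding paper \cite{color-coding}---and your calculation ($\ell!/\ell^{\ell}$ favorable outcomes, bounded below by $e^{-\ell}$ via Stirling's inequality $\ell! \geq \sqrt{2\pi\ell}\,(\ell/e)^{\ell}$) is exactly the standard argument behind the cited fact; if anything, one can skip Stirling entirely by noting $e^{\ell} = \sum_{k \geq 0} \ell^{k}/k! > \ell^{\ell}/\ell!$, which gives $\ell!/\ell^{\ell} > e^{-\ell}$ directly.
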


Next, we define the notion of a colorful solution for our problem.

\begin{definition}{\bf (Colorful set of cycles)} 
\label{def:colorful-sol}
A set $\mc$ of arc-disjoint cycles in $T$ that satisfies the property that for any two (not necessarily distinct) cycles $C,C' \in \mc$ and for any two distinct arcs $e \in A(C), e' \in A(C')$, $\chi(e) \neq \chi(e')$ holds is said to be a {\em colorful set of cycles}.
\end{definition}

Rephrasing Proposition \ref{prop:colorful} in the context of our problem, we have the following observation.

\begin{observation}
\label{obs:colorful-opt}
If $\mc$ is a solution of $\mi$ with the property that for each $C \in \mc$, $|C| \leq 2k+1$, then $\mc$ is a colorful set of cycles in $T$ with probability at least $e^{-\ell}$.
\end{observation}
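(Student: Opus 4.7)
The plan is to derive this observation as an almost immediate consequence of Proposition~\ref{prop:colorful}, once it is verified that the total number of arcs used by $\mc$ fits inside the parameter $\ell$.

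First, I would count arcs. Since $\mc$ is a solution consisting of $k$ arc-disjoint cycles and each $C \in \mc$ has length at most $2k+1$ by hypothesis, the total is $|A(\mc)| = \sum_{C \in \mc}|A(C)| \leq k(2k+1) = 2k^{2}+k = \ell$. Then I would unpack the definition of \emph{colorful}: because the cycles in $\mc$ are arc-disjoint, $\mc$ being a colorful set of cycles is precisely the condition that every pair of distinct arcs in the single set $A(\mc) \subseteq A(T)$ receives distinct colors under $\chi$.

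With these reductions in place, I would invoke Proposition~\ref{prop:colorful}. The cleanest route is to extend $A(\mc)$, if necessary, to an arbitrary superset $E \subseteq A(T)$ of size exactly $\ell$; here we may assume $|A(T)| \geq \ell$, since otherwise $T$ has only $\oo(k)$ vertices and the instance can be handled trivially. The proposition then guarantees that the arcs of $E$ receive pairwise distinct colors with probability at least $e^{-\ell}$, and this event is contained in the event that the subset $A(\mc)$ is colored with pairwise distinct colors, which is exactly what we need.

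There is no genuine obstacle in this proof: the whole observation is essentially a repackaging of standard color coding, and the value $\ell = 2k^{2}+k$ was chosen precisely so that the worst-case total arc count $k(2k+1)$ coming from Lemma~\ref{lem:short-cycle} matches the number of colors.
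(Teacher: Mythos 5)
Your proposal is correct and matches the paper's treatment: the paper gives no separate proof, presenting the observation as a direct rephrasing of Proposition~\ref{prop:colorful} after noting (via Lemma~\ref{lem:short-cycle}) that $|A(\mc)| \leq k(2k+1) = \ell$. Your extra care in extending $A(\mc)$ to a set of size exactly $\ell$ is a reasonable way to make the invocation of the proposition literal, but it adds nothing beyond what the paper implicitly assumes.
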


Armed with the guarantee that a solution (if one exists) of $\mi$ is colorful with sufficiently high probability, we focus on finding a colorful set of cycles in $T$. 

\begin{lemma}
\label{lem:colorful-cycles}
If $T$ has a colorful set of $k$ cycles, then such a set can be obtained in $\ell! n^{\oo(1)}$ time.
\end{lemma}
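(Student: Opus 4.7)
The plan is to iterate over all $\ell!$ orderings of the color set $[\ell]$ and, for each ordering $\pi$, run a polynomial-time dynamic program that searches for $k$ arc-disjoint closed walks in $T$ whose arcs carry pairwise distinct colors and appear in an order consistent with $\pi$. A simple directed cycle is then extracted from each such walk in polynomial time.

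For a fixed permutation $\pi$, the DP state I would use is $(i, j, u, v)$, where $i$ is the length of the processed prefix of $\pi$, $j$ is the number of walks already closed, and $u, v \in V(T) \cup \{\bot\}$ are the start and current vertex of the walk in progress (both $\bot$ when no walk is active). Writing $c = \pi(i+1)$, the transitions are: skip $c$, allowed only when no walk is active; start a new walk with any arc $e$ of color $c$, moving to $(i+1, j, \tail(e), \head(e))$; extend the active walk with an arc $e$ of color $c$ and $\tail(e) = v$, moving to $(i+1, j, u, \head(e))$; or close the active walk when moreover $\head(e) = u$, moving to $(i+1, j+1, \bot, \bot)$. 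Acceptance occurs at any state $(i, k, \bot, \bot)$. The state space has size $O(\ell k n^2)$ and each transition enumerates $O(n)$ arcs, so the per-permutation cost is $n^{\oo(1)}$ and the total cost is $\ell! \cdot n^{\oo(1)}$.

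For completeness, if $T$ has a colorful set $\mc^*$ of $k$ arc-disjoint cycles, then traversing each cycle from an arbitrarily chosen starting arc, concatenating the resulting color sequences across the $k$ cycles in any order, and appending the remaining unused colors produces a permutation $\pi$ for which the DP plainly accepts. For soundness, suppose the DP accepts for some $\pi$ with discovered walks $W_1, \dots, W_k$. Because the index $i$ advances monotonically through $\pi$, no color is reused anywhere, so all arcs used across the $W_i$ are distinct; hence the walks are pairwise arc-disjoint and each $W_i$ itself uses each arc at most once. Each $W_i$ is therefore an Eulerian subdigraph and decomposes in polynomial time into arc-disjoint simple directed cycles; picking one cycle from each decomposition yields $k$ arc-disjoint simple cycles whose arcs still use pairwise distinct colors, which is the desired colorful set.

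The main obstacle is that the DP cannot afford to track the set of vertices visited by the walk currently in progress, since doing so would destroy the $n^{\oo(1)}$ per-permutation bound; consequently the walks it outputs need not themselves be simple cycles. I circumvent this by relaxing the objective during the DP to arc-disjoint closed walks and then invoking the elementary decomposition of a closed directed walk into simple arc-disjoint cycles to recover genuine cycles in the extraction step.
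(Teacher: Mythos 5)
Your proof is correct, but it takes a genuinely different route from the paper's. The paper uses the permutation $\sigma$ only to partition the $\ell=k(2k+1)$ colors into $k$ consecutive blocks of size $2k+1$; it then searches for \emph{any} cycle in each block-induced arc subgraph $D^\sigma_i$, and arc-disjointness of the $k$ cycles comes for free because the blocks partition the color classes. This is simpler per permutation (just $k$ cycle detections), but it leans implicitly on Lemma~\ref{lem:short-cycle}: a block of $2k+1$ colors can host a full cycle of the colorful solution only because each such cycle has at most $2k+1$ arcs, which is why $\ell$ was chosen as $k(2k+1)$ in the first place. Your dynamic program instead threads through the permutation arc by arc, building $k$ arc-disjoint closed walks and then decomposing each into simple cycles. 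This buys you two things: you never need the length bound on individual cycles (any colorful set with at most $\ell$ arcs in total is found), and your output is itself a \emph{colorful} set of cycles in the sense of Definition~\ref{def:colorful-sol}, whereas the paper's cycles $C^\sigma_i$ may repeat colors within a block and are only guaranteed to be arc-disjoint (which is all the downstream Theorem~\ref{thm:col-code-algo} actually needs). The key correctness points --- that monotone consumption of $\pi$ forces pairwise distinct colors and hence pairwise distinct arcs, and that a closed walk without repeated arcs decomposes into arc-disjoint simple cycles --- are both handled properly, and the per-permutation cost is polynomial, so the $\ell!\, n^{\oo(1)}$ bound stands.
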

\begin{proof}
Consider a permutation $\sigma$ of $[\ell]$. For each $i \in [k]$, let $D^\sigma_i$ denote the subgraph of $T$ with $V(D^\sigma_i)=V(T)$ and $A(D^\sigma_i)=A(T) \cap \{ (u,v) \in A(T) \mid 2(i-1)k+i \leq \sigma(\chi((u,v))) \leq 2k+2(i-1)k+i\}$. That is, $A(D^\sigma_1)$ is the set of arcs of $T$ that are colored with the first $2k+1$ colors, $A(D^\sigma_2)$ is the set of arcs of $T$ that are colored with the next $2k+1$ colors and so on. For each $i \in [k]$, let $C^\sigma_i$ denote a cycle (if one exists) in $D^\sigma_i$. Let $\mc_\sigma$ denote the set $\{C^\sigma_i \mid i \in [k]\}$. For each permutation $\sigma$ of $[\ell]$, we compute the corresponding set $\mc_\sigma$. If $T$ has a colorful set of $k$ cycles, then $|\mc_\pi|=k$ for some permutation $\pi$ of $[\ell]$. Therefore, by computing $\mc_\pi$ for every permutation $\pi$ of $[\ell]$, we can obtain a colorful set of $k$ cycles in $T$ (if one exists). 
\end{proof}

Using the standard technique of derandomization of color coding based algorithms \cite{color-coding,fpt-book,splitters}, we have the following result by taking $m=|A(T)|$.

\begin{proposition}[\cite{color-coding,fpt-book,splitters}]
\label{prop:perfect-family}
Given integers $m,\ell \geq 1$, there is a family $\mf_{m,\ell}$ of coloring functions $\chi:A(T) \rightarrow [\ell]$ of size $e^{\ell} {\ell}^{\oo(\log \ell)} \log m$ that can be constructed in $e^{\ell} {\ell}^{\oo(\log \ell)} m \log m$ time satisfying the following property: for every set $E \subseteq A(T)$ of size $\ell$, there is a function $\chi \in \mf_{m,\ell}$ such that $\chi(e) \neq \chi(e')$ for any two distinct arcs $e,e' \in E$. 
\end{proposition}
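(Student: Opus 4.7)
The plan is to follow the classical Naor--Schulman--Srinivasan two-stage construction. The idea is to first shrink the universe from $[m]$ down to a small domain whose size depends only on $\ell$, and then construct a perfect hash family on that small domain; the desired family $\mf_{m,\ell}$ is obtained by composing the two.

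First I would build an auxiliary family $\mathcal{G}_1$ of functions $A(T) \to [\ell^2]$ of size $\ell^{\oo(1)} \log m$, constructible in time $\ell^{\oo(1)} m \log m$, such that for every $E \subseteq A(T)$ with $|E|=\ell$ some $g \in \mathcal{G}_1$ is injective on $E$. This is an $(m, \ell, \ell^2)$-splitter and can be produced using standard pairwise-independent hashing combined with $\epsilon$-biased probability spaces: sample linear hash functions $h_{a,b}(x)=(ax+b) \bmod p$ over a prime $p$ between $\ell^2$ and $2\ell^2$, bound the collision probability on an $\ell$-set via Markov, and derandomize the required iterations using explicit small-bias constructions.

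Next I would construct a second family $\mathcal{G}_2$ of functions $[\ell^2] \to [\ell]$ of size $e^{\ell} \ell^{\oo(\log \ell)}$ that is a perfect hash family for $\ell$-subsets of $[\ell^2]$. A random function $[\ell^2]\to[\ell]$ is injective on a fixed $\ell$-set with probability $\ell!/\ell^\ell \geq e^{-\ell}$, so by the probabilistic method a family of this size exists; an explicit construction of the claimed quality follows the NSS recursive doubling strategy, which builds perfect hash families on geometrically growing subdomains, splices them together through $\ell$-wise independent hash functions derived from polynomials over $\mathbb{F}_{2^t}$, and iterates for $\oo(\log \ell)$ rounds. The construction time here is independent of $m$ and absorbed in the $e^{\ell} \ell^{\oo(\log \ell)}$ factor.

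Finally I would set $\mf_{m,\ell} = \{\, g_2 \circ g_1 : g_1 \in \mathcal{G}_1,\ g_2 \in \mathcal{G}_2 \,\}$. For any $E \subseteq A(T)$ of size $\ell$, choose $g_1 \in \mathcal{G}_1$ injective on $E$; then $g_1(E)$ is an $\ell$-subset of $[\ell^2]$, and choosing $g_2 \in \mathcal{G}_2$ injective on $g_1(E)$ yields a composition that is injective on $E$. The size bound $|\mathcal{G}_1|\cdot|\mathcal{G}_2| = e^{\ell}\ell^{\oo(\log \ell)} \log m$ and the construction time $e^{\ell}\ell^{\oo(\log \ell)} m \log m$ follow directly. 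The main obstacle is the explicit construction of $\mathcal{G}_2$ with the tight $e^{\ell} \ell^{\oo(\log \ell)}$ bound: the probabilistic existence is easy, but brute-force derandomization over all functions $[\ell^2]\to[\ell]$ is far too expensive, and achieving the correct size requires the careful recursive NSS scheme rather than a direct approach.
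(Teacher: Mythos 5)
The paper does not prove this proposition---it is imported verbatim from the cited references---and your reconstruction is exactly the standard Naor--Schulman--Srinivasan two-stage construction (an $(m,\ell,\ell^2)$-splitter of size $\ell^{\oo(1)}\log m$ composed with an $(\ell^2,\ell,\ell)$-perfect hash family of size $e^{\ell}\ell^{\oo(\log \ell)}$) that those references use. The composition, size bound, and construction time all check out, so your proposal is correct and follows the same route the paper relies on.
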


Then, we have the following result.

\begin{theorem}
\label{thm:col-code-algo}
\textsc{ACT} can be solved in $\ostar(2^{\oo(k^2 \log k)})$ time.
\end{theorem}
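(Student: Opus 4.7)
The plan is to combine the structural result of Lemma \ref{lem:short-cycle} with the color-coding machinery that has just been set up. The starting observation is that, by Lemma \ref{lem:short-cycle}, it suffices to search for a solution $\mc$ in which every cycle has length at most $2k+1$; in particular, the total number of arcs covered by a hypothetical solution is at most $\ell := (2k+1)k = 2k^2 + k$, which is exactly the value used to define the coloring above and to split the color set into $k$ blocks of $2k+1$ colors.

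Next, I would assemble the pieces via the standard color-coding template. Fix a short solution $\mc$ with $|A(\mc)| \le \ell$. By Proposition \ref{prop:perfect-family} applied with this $\ell$ and $m = |A(T)|$, there exists some $\chi \in \mf_{m,\ell}$ that assigns pairwise distinct colors to all arcs of $A(\mc)$; with respect to such a $\chi$, the set $\mc$ is colorful in the sense of Definition \ref{def:colorful-sol}. Therefore, if I enumerate every $\chi \in \mf_{m,\ell}$ and, for each one, run the procedure of Lemma \ref{lem:colorful-cycles} to look for a colorful set of $k$ arc-disjoint cycles, then at least one choice of $\chi$ succeeds whenever $(T,k)$ is a yes-instance; conversely, a colorful set of $k$ arc-disjoint cycles is in particular a set of $k$ arc-disjoint cycles, so false positives are impossible.

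The remaining step is running-time bookkeeping. The family $\mf_{m,\ell}$ has size $e^{\ell}\,\ell^{\oo(\log \ell)} \log m$ and can be constructed in $e^{\ell}\,\ell^{\oo(\log \ell)} m \log m$ time; with $\ell = \oo(k^2)$ and $m \le n^2$, both quantities are bounded by $2^{\oo(k^2)} n^{\oo(1)}$. For each coloring in the family, Lemma \ref{lem:colorful-cycles} runs in $\ell!\,n^{\oo(1)} = 2^{\oo(\ell \log \ell)} n^{\oo(1)} = 2^{\oo(k^2 \log k)} n^{\oo(1)}$ time, since the $\log \ell = \oo(\log k)$ factor is absorbed into the $\oo$-notation. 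Multiplying the number of colorings by the per-coloring cost yields the claimed $\ostar(2^{\oo(k^2 \log k)})$ bound.

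There is no real obstacle here: every ingredient has been proved in the preceding pages. The only mild care needed is to invoke Proposition \ref{prop:perfect-family} with the value $\ell = 2k^2 + k$ that matches the blockwise decomposition inside Lemma \ref{lem:colorful-cycles}, and to note that the $\ell\log\ell$ exponent from $\ell!$ dominates the $\ell$ and $\ell\log\ell$ contributions coming from the perfect hash family. The main conceptual point is simply that Lemma \ref{lem:short-cycle} lets us treat the arc budget of any minimal solution as $\oo(k^2)$ rather than $\Theta(n)$, which is exactly what makes color coding applicable to an \emph{arc}-disjoint packing problem.
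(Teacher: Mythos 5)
Your proposal is correct and follows essentially the same route as the paper: derandomized color coding with $\ell = 2k^2+k$ colors justified by Lemma~\ref{lem:short-cycle}, iterating Lemma~\ref{lem:colorful-cycles} over the family from Proposition~\ref{prop:perfect-family}. Your running-time bookkeeping is in fact slightly more explicit than the paper's own write-up.
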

\begin{proof}
Consider an instance $\mi=(T,k)$ of \textsc{ACT}. Let $\ell=2k^2+k$. First, we compute the family $\mf_{m,\ell}$ of $e^{\ell} {\ell}^{\oo(\log \ell)} \log m$ coloring functions using Proposition \ref{prop:perfect-family} where $m$ is the number of arcs in $T$. Then, for each coloring function $\chi:A(T) \rightarrow [\ell]$ in $\mf_{m,\ell}$, we determine if $T$ has a colorful set of $k$ cycles using Lemma \ref{lem:colorful-cycles}. Due to the properties of $\mf_{m,\ell}$ guaranteed by Proposition \ref{prop:perfect-family}, it follows that $\mi$ is a yes-instance if and only if $T$ has a set of $k$ cycles that is colorful with respect to at least one of the coloring functions. The overall running time is $\ostar(2^{\oo(k^2 \log k)})$.
\end{proof}

Observe that the running time of the algorithm to find a colorful set of $k$ cycles can be improved to $2^\ell n^{\oo(1)}$ by employing a standard dynamic programming scheme. This will result in an $\ostar(2^{\oo(k^2)})$ time algorithm for \textsc{ACT}. However, we skip the details of the same as we will describe an $\ostar(2^{\oo(k \log k)})$ time algorithm for \textsc{ACT} in Section \ref{sec:fpt-ker2}.

%
%

\subsection{A Polynomial Kernel}
Now, we show that \textsc{ACT} admits a polynomial kernel. We use Theorem \ref{thm:lin-ep} to describe a quadratic vertex kernel. 

\begin{theorem}
\label{thm:cp-quad-kernel}
\textsc{ACT} admits a kernel with $\oo(k^2)$ vertices.  
\end{theorem}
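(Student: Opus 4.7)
The plan is to combine Theorem~\ref{thm:lin-ep} with the short-cycle property of Lemma~\ref{lem:short-cycle} to reduce $T$ to a tournament on $\oo(k^2)$ vertices.

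First, I would invoke Theorem~\ref{thm:lin-ep} in polynomial time. If it outputs $k$ arc-disjoint triangles, these already form a valid cycle packing, so $(T,k)$ is a yes-instance and we return a trivial constant-size yes-instance. Otherwise, we obtain a feedback arc set $F\subseteq A(T)$ with $|F|\le 6(k-1)$, and we set $S:=V(F)$, so that $|S|\le 12(k-1)=\oo(k)$. Since $T-F$ is acyclic and every arc of $F$ has both endpoints in $S$, every cycle of $T$ must contain at least one arc of $F$ and hence at least two vertices of $S$; in particular, $S$ is a feedback vertex set of $T$ and $T-S$ is a transitive sub-tournament.

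Fixing a topological ordering $\sigma$ of $T-F$ in which all arcs of $F$ point backward (which we may assume by taking $F$ to be inclusion-minimal), the members of $S$ in $\sigma$-order partition $V(T)\setminus S$ into at most $|S|+1=\oo(k)$ ``intervals''. Within one such interval $I$, every vertex shares exactly the same in- and out-neighbourhood in $S$, since any arc between a non-$S$ vertex and an $S$-vertex lies outside $F$ and therefore points forward in $\sigma$; moreover, any two vertices of $I$ are joined by a $\sigma$-forward arc. The reduction rule is then: in each interval $I$, retain only the $\oo(k)$ earliest and the $\oo(k)$ latest vertices in $\sigma$-order, and delete the rest. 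A simple count yields $|S|+\oo(k)\cdot \oo(k)=\oo(k^2)$ vertices in the reduced instance.

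The main obstacle is proving safety of this reduction. Deletion can only destroy cycles, so one direction is immediate. For the converse, Lemma~\ref{lem:short-cycle} lets us fix a yes-solution $\mathcal{C}$ consisting of $k$ arc-disjoint cycles each of length at most $2k+1$, touching only $\oo(k^2)$ vertices in total. If a cycle $C\in\mathcal{C}$ uses a deleted vertex $v$ of some interval $I$, I would substitute $v$ by a retained $v'\in I$ occupying the same ``slot'' in $C$, meaning that $v'$ lies strictly between the $\sigma$-positions of $v$'s $C$-neighbours inside $I$ (a non-trivial constraint only when those neighbours actually lie in $I$). Since the vertices of $I$ share all adjacencies outside $I$ and are mutually connected by forward arcs, the substitution produces a valid cycle in $T$. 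The retention threshold of $\oo(k)$ vertices on each side of each interval is tuned so that a fresh substitute $v'$, not already involved in conflicting arcs of the other $k-1$ cycles, is always available, preserving arc-disjointness. Performing these substitutions across all cycles in $\mathcal{C}$ yields the desired solution in the reduced instance, completing the argument.
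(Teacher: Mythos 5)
Your structural setup is sound and coincides with the paper's: apply Theorem~\ref{thm:lin-ep}, set $S=V(F)$ with $|S|=\oo(k)$, observe that every cycle contains an arc of $F$ and hence two vertices of $S$, that $T-S$ is transitive, and that adjacency to $S$ is constant on each of the $\oo(k)$ intervals. Your reduction rule (keep a prefix and a suffix of $\oo(k)$ vertices in each interval) is also close in spirit to the paper's (keep, for each $s\in S$, the first $2k+1$ out-neighbours of $s$ in $T-S$, which amounts to keeping a prefix of each interval). The gap is in the safety argument, precisely in the case you yourself flag as ``a non-trivial constraint'': when both $C$-neighbours of a deleted vertex $v$ lie in the same interval $I$.

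Concretely, let $I=(u_1,\dots,u_N)$ in $\sigma$-order with $N$ large, and suppose you retain $u_1,\dots,u_{ck}$ and $u_{N-ck+1},\dots,u_N$. Since all arcs inside $I$ point forward, a cycle of the packing may legitimately contain the subpath $u_{ck}\to u_{N/2}\to u_{N-ck+1}$ (with the cycle entering and leaving $I$ at these endpoints). Every admissible substitute for $u_{N/2}$ must lie strictly between $u_{ck}$ and $u_{N-ck+1}$ in $\sigma$, and every such vertex is deleted; no tuning of the retention threshold helps, because the obstruction is positional rather than a matter of counting blocked arcs. (Shortcutting via the chord $(u_{ck},u_{N-ck+1})$ is not automatically available either, since that arc may be occupied by another cycle of the packing.) A secondary weakness is that ``a fresh substitute is always available'' is asserted rather than proved: one needs the count that the $k$ cycles use at most $k$ arcs out of $p$ and at most $k$ arcs into $q$, and one must control the interaction between successive substitutions --- the paper does this with an extremal choice of $\mc$ minimizing the number of vertex-occurrences outside the kernel, so that a single successful substitution already yields a contradiction. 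The paper avoids your positional obstruction altogether by replacing, in one step, the entire maximal subpath of $C$ between two consecutive visits $x,y\in S$ by the length-two path $x\to z\to y$ for a fresh $z$ among the first $2k+1$ out-neighbours of $x$; since the replacement has only one internal vertex, no ``in-between'' condition ever arises. Rewriting your substitution as this one-shot replacement of whole $S$-to-$S$ excursions would close the gap.
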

\begin{proof}
Let $(T,k)$ denote an instance of \textsc{ACT}. From Theorem \ref{thm:lin-ep}, we know that $T$ has either $k$ arc-disjoint triangles or a feedback arc set $F$ of size at most $6(k-1)$. In the former case, we return a trivial yes-instance of constant size as the kernel. In the latter case, $S=V(F)$ is a feedback vertex set of $T$ of size at most $12k$. Let $D$ denote the transitive tournament $T-S$ and $\delta$ denote its unique topological ordering. Observe that for each $v \in S$, the subtournament of $T$ induced by $V(D) \cup \{v\}$ is also transitive. If there is a cycle in $D \cup \{v\}$, then this cycle (which is also a cycle in $T$) has no arc from $F$ leading to a contradiction. 

For each $v \in S$, let $R(v)$ be the set of first (with respect to $\delta$) $2k+1$ vertices in $N^{+}(v)$. Let $T'$ be the subtournament of $T$ induced by $S \cup \{R(v) \mid v \in S\}$. Clearly, $T'$ has $\oo(k^2)$ vertices. We claim that $(T',k)$ is the required kernel of $(T,k)$. We need to show that $T$ has $k$ arc-disjoint cycles if and only if $T'$ has $k$ arc-disjoint cycles. The reverse direction of the claim holds trivially. Let us now prove the forward direction. Suppose $T$ has a set of $k$ arc-disjoint cycles. Among all such sets, let $\mc$ be one that minimizes $\sum_{C \in \mc} |V(C) \cap (V(T) \setminus V(T'))|$. Suppose there is a cycle $C$ in $\mc$ that is not in $T'$. Then, there is a vertex $v_i \in V(C)$ that is not in $T'$. As argued earlier, any cycle in $T$ has at least two vertices from $S$. Let $x$ and $y$ be two such vertices in $C$ where $(x,v_1,\dots,v_i,\dots,v_q,y)$ is a path in $C$ from $x$ to $y$ with internal vertices from $V(D)$. 

The subtournaments $\widehat{T}=D \cup \{x\}$ and $\widetilde{T}=D \cup \{y\}$ are transitive with unique topological orderings $\sigma$ and $\pi$, respectively. Observe that for all distinct $u,v \in V(D)$, $\pi(u)<\pi(v)$ if and only if $\sigma(u)<\sigma(v)$. As $(x,v_1,\dots,v_i,\dots,v_q)$ is a path in $\widehat{T}$, it follows that $\sigma(x)<\sigma(v_j)$ for each $j \in [q]$. Similarly, as $(v_1,\dots,v_i,\dots,v_q,y)$ is a path in $\widetilde{T}$, we have $\pi(y)>\pi(v_j)$ for each $j \in [q]$. As $v_i \notin V(T')$, it follows that $v_i \notin R(x)$ and $|R(x)| = 2k+1$. Then, there is at least one vertex $z$ in $R(x)$ such that the arcs $(x,z)$ and $(z,y)$ are not in any cycle in $\mc$. Now, $\sigma(z)<\sigma(v_i)$ as $z,v_i \in N^{+}(x)$, $v_i \notin R(x)$ and $z \in R(x)$. Thus, we have $\pi(z)<\pi(v_i)$. As $\pi(v_i)<\pi(y)$, it follows that $(z,y) \in A(T)$ as $\pi(z)<\pi(y)$. Then, by replacing the path $(x,v_1,\dots,v_i,\dots,v_q,y)$ by $(x,z,y)$, we obtain another set $\mc'$ of $k$ arc-disjoint cycles such that $\sum_{C \in \mc} |V(C) \cap (V(T) \setminus V(T'))|>\sum_{C \in \mc'} |V(C) \cap (V(T) \setminus V(T'))|$. However, this leads to a contradiction by the choice of $\mc$.
\end{proof}

\section{An Improved FPT Algorithm and A Smaller Kernel}
\label{sec:fpt-ker2}
Next, we show that \textsc{ACT} can be solved in $\ostar(2^{\oo(k \log k)})$ time and admits a kernel with $\oo(k)$ vertices. 


\subsection{A Linear Vertex Kernel}
We show that the linear kernelization described in \cite{jcss11} for \textsc{Feedback Arc Set in Tournaments} also leads to a linear kernelization for our problem. In order to describe the kernel, we need to state some terminology defined in \cite{jcss11}. Let $T$ be a tournament on $n$ vertices. First, we apply the following reduction rule.

\begin{reduction rule}
\label{rule1}
If a vertex $v$ is not in any cycle, then delete $v$ from $T$.
\end{reduction rule}

This rule is clearly safe as our goal is to find $k$ cycles and $v$ cannot be in any of them. To describe our next rule, we need to state some terminology and a lemma known from \cite{jcss11}. For an ordering $\sigma$ of $V(T)$, let $T_\sigma$ denote the tournament $T$ whose vertices are ordered according to $\sigma$. Clearly, $V(T_\sigma)=V(T)$ and $A(T_\sigma)=A(T)$ since $T$ and $T_\sigma$ denote the same tournament. An arc $(u,v) \in A(T_\sigma)$ is called a {\em back arc} if $\sigma(u)>\sigma(v)$ and it is called a {\em forward arc} otherwise. An {\em interval} is a consecutive set of vertices in $T_\sigma$. 

\begin{lemma}[\cite{jcss11}]\footnote{Lemma \ref{lem:safe-part} is Lemma 3.9 of \cite{jcss11} that has been rephrased to avoid the use of several definitions and terminology introduced in \cite{jcss11}.}
\label{lem:safe-part}
Let $T_\sigma$ be an ordered tournament on which Reduction Rule \ref{rule1} is not applicable. Let $B$ denote the set of back arcs in $T_\sigma$ and $E$ denote the set of arcs in $T_\sigma$ with endpoints in different intervals. If $|V(T_\sigma)| \geq 2 |B|+1$, then there exists a partition $\mj$ of $V(T_\sigma)$ into intervals with the following properties that can be computed in polynomial time. 
\begin{itemize}
\item There is at least one arc $e=(u,v) \in A(T)$ with $e \in B \cap E$.
\item There are $|B \cap E|$ arc-disjoint cycles using only arcs in $E$. 
\end{itemize}
\end{lemma}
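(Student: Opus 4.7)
The plan is to construct the partition $\mj$ explicitly using the slack guaranteed by the size condition $|V(T_\sigma)| \geq 2|B|+1$, and to exhibit the arc-disjoint cycles by routing each inter-interval back arc through a dedicated pivot vertex.

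First, I would exploit the size condition as follows. Since the $|B|$ back arcs touch at most $2|B|$ vertices, there exists a vertex $w \in V(T_\sigma)$ that is not an endpoint of any back arc, so every arc incident to $w$ is a forward arc. Let $m = \sigma(w)$ and consider the cut that splits $V(T_\sigma)$ into the intervals $I_1 = \{v_1,\dots,v_m\}$ and $I_2 = \{v_{m+1},\dots,v_n\}$. A back arc $(u,v)$ with $\sigma(v) \leq m < \sigma(u)$ lies in $B \cap E$ for this cut, and at least one such back arc must exist: otherwise every back arc is confined entirely to $I_1$ or $I_2$, and since Reduction Rule \ref{rule1} is not applicable there must be some back arc to recurse on. For each such back arc $(u,v)$, the triple $(u,v,w)$ describes a triangle using the back arc $(u,v)$ together with the two forward arcs $(v,w)$ and $(w,u)$, and all three arcs lie in $E$.

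The obstacle is that this naive routing reuses the two arcs incident to $w$, so arc-disjointness fails as soon as more than one back arc crosses position $m$. To overcome this, I would refine the partition recursively: inside each interval that still contains back arcs, apply the same construction using a fresh pivot vertex drawn from the free-vertex pool of that sub-interval. An accounting argument tracking free vertices against back arc endpoints in each sub-instance shows that the size condition is preserved across levels and yields a distinct pivot for each inter-interval back arc produced by the refinement.

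The main technical difficulty, which I expect to be the hardest step, is to show that the collection of triangles produced in this recursive procedure is globally arc-disjoint. I would formalize this via a Hall-type matching between inter-interval back arcs and routing pivots: for any subset $S \subseteq B \cap E$, the set of pivot vertices simultaneously usable for $S$ (free vertices with the appropriate positional relationship to the arcs of $S$) has size at least $|S|$. Because a pivot $w$ used deeper in the recursion lies strictly between the endpoints of its back arc and is free of all back-arc incidences, the matching can be turned into an actual assignment of $|B \cap E|$ arc-disjoint cycles, each using only arcs from $E$, completing the proof.
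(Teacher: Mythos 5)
First, note that this paper does not prove the statement at all: it is imported verbatim (modulo rephrasing) as Lemma~3.9 of \cite{jcss11}, so there is no in-paper proof to compare against, and your attempt has to be judged on its own. Your opening move---using $|V(T_\sigma)|\geq 2|B|+1$ to find a vertex $w$ incident to no back arc, and routing a crossing back arc $(u,v)$ through the triangle $v\to w\to u\to v$---is a sensible ingredient, but already here there are two concrete problems. With your partition $I_1=\{v_1,\dots,v_m\}$, $I_2=\{v_{m+1},\dots,v_n\}$ and $w=v_m\in I_1$, the arc $(v,w)$ has both endpoints in $I_1$, so it is \emph{not} in $E$ and the triangle does not certify anything; you would need $\{w\}$ to be its own interval. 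Also, your justification that some back arc crosses the cut (``otherwise every back arc is confined\dots there must be some back arc to recurse on'') is not an argument; the correct reason is that Reduction Rule~\ref{rule1} being inapplicable puts $w$ on a cycle, all arcs incident to $w$ are forward, and following that cycle from $w$ one must traverse an arc from a position above $m$ to a position below $m$, i.e.\ a back arc strictly spanning $w$.

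The more serious issue is that the entire content of the lemma---exhibiting $|B\cap E|$ arc-disjoint cycles using only arcs of $E$---is deferred to a one-sentence recursive refinement plus an unverified ``Hall-type matching between inter-interval back arcs and routing pivots.'' That invariant is both the wrong condition and false in general. Wrong, because arc-disjointness of the triangles $(v_i,w_i,u_i)$ does not require distinct pivots: two crossing back arcs with four distinct endpoints may share the same pivot $w$ and still yield arc-disjoint triangles. False, because when $|V(T_\sigma)|=2|B|+1$ and all $2|B|$ back-arc endpoints are distinct there is exactly one free vertex, so any subset $S$ of two or more external back arcs has only one ``simultaneously usable pivot,'' violating your Hall condition even though the lemma holds (via the single shared pivot). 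Conversely, the genuinely hard case---several external back arcs sharing a head or a tail, which forces their connecting paths apart---is exactly where your sketch gives no construction: the spanned free vertices of nested back arcs with a common endpoint form a chain of subsets, and nothing in your accounting shows that enough free vertices sit in the right positions. So the proposal identifies a correct starting observation but leaves the actual theorem unproved.
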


Our reduction rule that is based on this lemma is as follows.

\begin{reduction rule}
\label{rule2}
Let $T_\sigma$ be an ordered tournament on which Reduction Rule \ref{rule1} is not applicable. Let $B$ denote the set of back arcs in $T_\sigma$ and $E$ denote the set of arcs in $T_\sigma$ with endpoints in different intervals. Let $\mj$ be a partition of $V(T_\sigma)$ into intervals satisfying the properties specified in Lemma \ref{lem:safe-part}. Reverse all arcs in $B \cap E$ and decrease $k$ by $|B \cap E|$. 
\end{reduction rule}

\begin{lemma}
Reduction Rule \ref{rule2} is safe.
\end{lemma}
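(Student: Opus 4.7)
The plan is to show that $(T,k)$ is a yes-instance of \textsc{ACT} if and only if $(T',k-|B\cap E|)$ is a yes-instance, where $T'$ is the tournament obtained by reversing the arcs in $F := B\cap E$. Writing $s=|F|$, both implications will follow from a single structural observation about cycles in $T'$.

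For the forward direction, suppose $T$ admits $k$ arc-disjoint cycles. Since the cycles are arc-disjoint and $|F|=s$, at most $s$ of them can contain an arc of $F$. Discarding those leaves at least $k-s$ arc-disjoint cycles that avoid $F$, and these persist as cycles in $T'$ because $T'$ agrees with $T$ on $A(T)\setminus F$.

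For the reverse direction, I would first establish the key structural claim: every cycle of $T'$ is contained in a single interval of $\mj$. The reason is that in $T'_\sigma$ every cross-interval arc points from a lower-index to a higher-index interval: arcs in $E\setminus B$ were already forward in $T_\sigma$, while each arc of $F$, originally a back arc pointing from a later interval to an earlier one, has been reversed and so now also runs forward. Consequently, the sequence of interval indices traversed along any cycle of $T'$ is non-decreasing, and therefore constant because the cycle closes. Since $F\subseteq E$ contains no within-interval arc, the within-interval subtournaments of $T$ and $T'$ coincide, so any $k-s$ arc-disjoint cycles guaranteed in $T'$ are also arc-disjoint cycles in $T$ whose arcs lie entirely inside intervals, and hence are disjoint from $E$. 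Adjoining the $s$ arc-disjoint cycles of $T$ that use only arcs in $E$, supplied by Lemma~\ref{lem:safe-part}, yields $k$ arc-disjoint cycles in $T$.

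The main obstacle is the structural claim that every cycle of $T'$ lies in a single interval of $\mj$; all the remaining bookkeeping (counting at most $s$ cycles that touch $F$ in the forward direction, and combining two arc-disjoint families in the reverse direction) is then immediate, since the reversal confines the ``cycle space'' of $T'$ to the union of within-interval arcs, while the lemma's cycles are supported on the complementary set $E$.
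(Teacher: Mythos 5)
Your proof is correct and follows essentially the same route as the paper's: both hinge on the observation that after reversing $B\cap E$ every cross-interval arc of $T'_\sigma$ is forward, so every cycle of $T'$ is confined to a single interval (hence survives in $T$), and both then adjoin the $|B\cap E|$ arc-disjoint cycles supported on $E$ from Lemma~\ref{lem:safe-part} to recover $k$ cycles in $T$. The only immaterial difference is in the other direction, where you discard the at most $|B\cap E|$ cycles meeting $F=B\cap E$ directly, while the paper discards those meeting $E$; both counts coincide and both arguments are valid.
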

\begin{proof}
Let $T'_\sigma$ be the tournament obtained from $T_\sigma$ by reversing all arcs in $B \cap E$. Suppose $T'_\sigma$ has $k-|B \cap E|$ arc-disjoint cycles. Then, it is guaranteed that each such cycle is completely contained in an interval. This is due to the fact that $T'_\sigma$ has no back arc with endpoints in different intervals. Indeed, if a cycle in $T'_\sigma$ uses a forward (back) arc with endpoints in different intervals, then it also uses a back (forward) arc with endpoints in different intervals. It follows that for each arc $(u,v) \in E$, neither $(u,v)$ nor $(v,u)$ is used in these $k-|B \cap E|$ cycles. Hence, these $k-|B \cap E|$ cycles in $T'_\sigma$ are also cycles in $T_\sigma$. Then, we can add a set of $|B \cap E|$ cycles obtained from the second property of Lemma \ref{lem:safe-part} to these $k-|B \cap E|$ cycles to get $k$ cycles in $T_\sigma$. Conversely, consider a set of $k$ cycles in $T_\sigma$. As argued earlier, we know that the number of cycles that have an arc that is in $E$ is at most $|B \cap E|$. The remaining cycles (at least $k-|B \cap E|$ of them) do not contain any arc that is in $E$, in particular, they do not contain any arc from $B \cap E$. Therefore, these cycles are also cycles in $T'_\sigma$.
\end{proof}

\begin{theorem}
\label{thm:cp-linear-kernel}
\textsc{ACT} admits a kernel with $\oo(k)$ vertices. 
\end{theorem}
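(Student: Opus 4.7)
The plan is to apply Reduction Rules \ref{rule1} and \ref{rule2} exhaustively, using Theorem \ref{thm:lin-ep} as the bridge that supplies the ordering needed to invoke Rule \ref{rule2}. The key point is that once Rule \ref{rule2} is inapplicable for an ordering in which the set of back arcs coincides with a small feedback arc set, the vertex count is automatically linear in $k$.

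Concretely, the main iteration runs as follows. First I would apply Rule \ref{rule1} exhaustively, so that every vertex of the current tournament $T$ lies on some cycle. I then invoke Theorem \ref{thm:lin-ep} on the current instance $(T,k)$: either $T$ contains $k$ arc-disjoint triangles, in which case I output a trivial yes-instance of constant size as the kernel; or I obtain in polynomial time a feedback arc set $F$ of $T$ with $|F| \leq 6(k-1)$. In the latter case I let $\sigma$ be any topological ordering of the acyclic digraph $T - F$; this is an ordering of $V(T)$ in which the back arcs of $T_\sigma$ are exactly $F$, so $|B| = |F| \leq 6(k-1)$. If $|V(T)| \geq 2|B|+1$, then Lemma \ref{lem:safe-part} yields an interval partition $\mj$ satisfying both listed properties, and I apply Rule \ref{rule2} with this $\sigma$ and $\mj$, decreasing $k$ by $|B \cap E| \geq 1$. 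I then return to the start of the iteration.

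Since each application of Rule \ref{rule2} strictly decreases $k$, the process terminates after at most $k$ rounds. At termination, either a trivial yes-instance has already been output, or Rule \ref{rule2} is inapplicable for the ordering $\sigma$ produced in the final round. The latter means $|V(T)| < 2|B|+1$ with $|B| \leq 6(k-1)$ for the current value of $k$, which gives $|V(T)| \leq 12(k-1) = \oo(k)$, as required for a linear vertex kernel.

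The main subtlety I anticipate is the bookkeeping across iterations: reversing arcs in Rule \ref{rule2} can destroy the invariant that every vertex lies on a cycle, so Rule \ref{rule1} must be re-applied; and the underlying tournament is genuinely modified, so the feedback arc set and the ordering $\sigma$ must be recomputed from scratch via a fresh invocation of Theorem \ref{thm:lin-ep}. Iterating the full cycle (Rule \ref{rule1} $\to$ Theorem \ref{thm:lin-ep} $\to$ Rule \ref{rule2}) rather than reusing a fixed ordering is what keeps $|B|$ tied to $6(k-1)$ with the current $k$ throughout the procedure, and ultimately delivers the $\oo(k)$ bound on the number of vertices of the final instance.
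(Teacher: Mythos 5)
Your proposal is correct and follows essentially the same route as the paper: apply Reduction Rule \ref{rule1}, invoke Theorem \ref{thm:lin-ep} to either answer yes or obtain a feedback arc set of size at most $6(k-1)$, take a topological ordering of $T-F$ so that the back arcs number at most $6(k-1)$, and apply Reduction Rule \ref{rule2} via Lemma \ref{lem:safe-part} until the vertex count drops below $2|B|+1 = \oo(k)$. Your explicit treatment of the iteration and of re-running the whole pipeline after each arc reversal is a point the paper leaves implicit, but it is the same argument.
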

\begin{proof}
Let $(T,k)$ denote the instance obtained from the input instance by applying Reduction Rule \ref{rule1} exhaustively. From Lemma \ref{thm:lin-ep}, we know that either $T$ has $k$ arc-disjoint triangles or has a feedback arc set of size at most $6(k-1)$ that can be obtained in polynomial time. In the first case, we return a trivial yes-instance of constant size as the kernel. In the second case, let $F$ be the feedback arc set of size at most $6(k-1)$ of $T$. Let $\sigma$ denote a topological ordering of the vertices of the directed acyclic graph $T-F$. As $V(T-F)=V(T)$, $\sigma$ is an ordering of $V(T)$ such that $T_\sigma$ has at most $6(k-1)$ back arcs. If $|V(T_\sigma)| \geq 12k-11$, then from Lemma \ref{lem:safe-part}, there is a partition of $V(T_\sigma)$ into intervals with the specified properties. Therefore, Reduction Rule \ref{rule2} is applicable (and the parameter drops by at least 1). When we obtain an instance where neither of the Reduction Rules \ref{rule1} and \ref{rule2} is applicable, it follows that the tournament in that instance has at most $12k$ vertices. 
\end{proof}

\subsection{A Faster FPT Algorithm}
Here, we show that \textsc{ACT} can be solved in $\ostar(2^{\oo(k \log k)})$ time. The idea is to reduce the problem to the following \textsc{Arc-Disjoint Paths} problem in directed acyclic graphs.

\defparprob{Arc-Disjoint Paths}
{A digraph $D$ on $n$ vertices and $k$ ordered pairs $(s_1,t_1),\dots,(s_k,t_k)$ of vertices of $D$.}
{$k$}
{Do there exist arc-disjoint paths $P_1,\dots,P_k$ in $D$ such that $P_i$ is a path from $s_i$ to $t_i$ for each $i \in [k]$?}

On directed acyclic graphs, \textsc{Arc-Disjoint Paths} is known to be \NP-complete \cite{dag-edp-npc}, \W[1]-hard \cite{dag-edp} and solvable in $n^{\oo(k)}$ time \cite{dag-edp-xp}. Despite its fixed-parameter intractability, we will show that we can use the $n^{\oo(k)}$ algorithm to describe another (and faster) \FPT\ algorithm for \textsc{ACT}.

\begin{theorem}
\label{thm:reduc-cp-edp}
\textsc{ACT} can be solved in $\ostar(2^{\oo(k \log k)})$ time.
\end{theorem}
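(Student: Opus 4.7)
The plan is to reduce \textsc{ACT} to polynomially many instances of \textsc{Arc-Disjoint Paths} on a DAG, exploiting the small feedback arc set guaranteed by Theorem \ref{thm:lin-ep} together with the linear kernel of Theorem \ref{thm:cp-linear-kernel}. First, I would invoke the kernelization to reduce to an equivalent instance on $n = \oo(k)$ vertices. Then, applying Theorem \ref{thm:lin-ep} to this kernel, either one obtains $k$ arc-disjoint triangles (in which case the answer is YES), or one computes in polynomial time a feedback arc set $F$ with $|F| \leq 6(k-1)$. In the latter case $D := T - F$ is a directed acyclic graph, and crucially every cycle of $T$ uses at least one arc of $F$.

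The key structural observation is that any collection of $k$ arc-disjoint cycles in $T$ is encoded by (i)~a function $\phi : F \to [k] \cup \{0\}$ assigning each $F$-arc to one of the $k$ cycles (or marking it unused), (ii)~for each $i \in [k]$, a cyclic ordering of $\phi^{-1}(i)$, and (iii)~for each $i$ and each consecutive pair of $F$-arcs $(u,v),(u',v')$ in that cyclic order, a path in $D$ from $v$ to $u'$, with all such paths across all cycles being pairwise arc-disjoint. I would enumerate all choices in (i) and (ii); the number of configurations is at most $(k+1)^{|F|} \cdot |F|! = 2^{\oo(k \log k)}$. For each configuration, I build the corresponding \textsc{Arc-Disjoint Paths} instance on $D$ with at most $|F| = \oo(k)$ source-sink pairs, and solve it by the $n^{\oo(k)}$-time algorithm of \cite{dag-edp-xp}. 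Since $n = \oo(k)$, a single call runs in $\oo(k)^{\oo(k)} = 2^{\oo(k \log k)}$ time, yielding a total running time of $\ostar(2^{\oo(k \log k)})$.

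The step that most requires care is the reverse direction of correctness: given arc-disjoint paths in $D$ with the prescribed endpoints, I must extract $k$ arc-disjoint simple cycles in $T$. Concatenating, for each $i$, the $F$-arcs in $\phi^{-1}(i)$ with the corresponding path segments produces a closed walk $W_i$ in $T$, but $W_i$ may fail to be a simple cycle since two path segments within the same walk could share internal vertices. The resolution is that each arc of $T$ appears in at most one of the walks $W_1,\ldots,W_k$ (the $F$-arcs are partitioned by $\phi$, and the DAG paths are pairwise arc-disjoint by construction), and each $W_i$ is non-empty since it contains an $F$-arc, so $W_i$ must contain a simple directed cycle $C_i \subseteq W_i$. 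The cycles $C_1, \ldots, C_k$ then inherit arc-disjointness from the walks, giving the required solution in $T$. The forward direction is immediate, as any $k$ arc-disjoint cycles in $T$ match one of the enumerated configurations, and in that configuration the path segments of the cycles form a valid \textsc{Arc-Disjoint Paths} solution in $D$.
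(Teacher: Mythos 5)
Your proof follows essentially the same route as the paper's: kernelize to $\oo(k)$ vertices, compute a feedback arc set of size $\oo(k)$ via Theorem~\ref{thm:lin-ep}, enumerate the $2^{\oo(k\log k)}$ assignments of feedback arcs to cycles together with their cyclic orders, and solve the resulting \textsc{Arc-Disjoint Paths} instances on the DAG $D$ with the $n^{\oo(k)}$-time algorithm of~\cite{dag-edp-xp}. Your explicit treatment of the reverse direction --- extracting arc-disjoint simple cycles from the closed walks obtained by concatenation --- is a detail the paper leaves implicit, but the underlying argument is the same.
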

\begin{proof}
Consider an instance $(T,k)$ of \textsc{ACT}. Using Theorem \ref{thm:cp-linear-kernel}, we obtain a kernel $\mi=(\widehat{T},\widehat{k})$ such that $\widehat{T}$ has $\oo(k)$ vertices. Further, $\widehat{k} \leq k$. By definition, $(T,k)$ is a yes-instance if and only if $(\widehat{T},\widehat{k})$ is a yes-instance. Using Theorem \ref{thm:lin-ep}, we know that $\widehat{T}$ either contains $\widehat{k}$ arc-disjoint triangles or has a feedback arc set of size at most $6(\widehat{k}-1)$ that can be obtained in polynomial time. If Theorem \ref{thm:lin-ep} returns a set of $\widehat{k}$ arc-disjoint triangles in $\widehat{T}$, then we declare that $(T,k)$ is a yes-instance. 

Otherwise, let $\widehat{F}$ be the feedback arc set of size at most $6(\widehat{k}-1)$ returned by Theorem \ref{thm:lin-ep}. Let $D$ denote the (acyclic) digraph obtained from $\widehat{T}$ by deleting $\widehat{F}$. Observe that $D$ has $\oo(k)$ vertices. Suppose $\widehat{T}$ has a set $\mc=\{C_1,\dots,C_{\widehat{k}}\}$ of $\widehat{k}$ arc-disjoint cycles. For each $C \in \mc$, we know that $A(C) \cap \widehat{F} \neq \emptyset$ as $\widehat{F}$ is a feedback arc set of $\widehat{T}$. We can guess that subset $F$ of $\widehat{F}$ such that $F=\widehat{F} \cap A(\mc)$. Then, for each cycle $C_i \in \mc$, we can guess the arcs $F_i$ from $F$ that it contains and also the order $\sigma_i$ in which they appear. This information is captured as a partition $\mf$ of $F$ into $\widehat{k}$ sets, $F_1$ to $F_{\widehat{k}}$ and the set $\{\sigma_1,\dots,\sigma_{\widehat{k}}\}$ of permutations where $\sigma_i$ is a permutation of $F_i$ for each $i \in [\widehat{k}]$. Any cycle $C_i$ that has $F_i \subseteq F$ contains a $(v,x)$-path between every pair $(u,v)$, $(x,y)$ of consecutive arcs of $F_i$ with arcs from $A(D)$. That is, there is a path from $\head(\sigma_i^{-1}(j))$ and $\tail(\sigma_i^{-1}((j+1) \mod |F_i|))$ with arcs from $D$ for each $j \in [|F_i|]$. The total number of such paths in these $\widehat{k}$ cycles is $\oo(|F|)$ and the arcs of these paths are contained in $D$ which is a (simple) directed acyclic graph. 

The number of choices for $F$ is $2^{|\widehat{F}|}$ and the number of choices for a partition $\mf=\{F_1,\dots,F_{\widehat{k}}\}$ of $F$ and a set $X=\{\sigma_1,\dots,\sigma_{\widehat{k}}\}$ of permutations is $2^{\oo(|\widehat{F}| \log |\widehat{F}|)}$. Once such a choice is made, the problem of finding $\widehat{k}$ arc-disjoint cycles in $\widehat{T}$ reduces to the problem of finding $\widehat{k}$ arc-disjoint cycles $\mc=\{C_1,\dots,C_{\widehat{k}}\}$ in $\widehat{T}$ such that for each $1 \leq i \leq \widehat{k}$ and for each $1 \leq j \leq |F_i|$, $C_i$ has a path $P_{ij}$ between $\head(\sigma_i^{-1}(j))$ and $\tail(\sigma_i^{-1}((j+1) \mod |F_i|))$ with arcs from $D=\widehat{T}-\widehat{F}$. This problem is essentially finding $r=\oo(|\widehat{F}|)$ arc-disjoint paths in $D$ and can be solved in ${|V(D)|}^{\oo(r)}$ time using the algorithm in \cite{dag-edp-xp}. Therefore, the overall running time of the algorithm is $\ostar(2^{\oo(k \log k)})$ as $|V(D)|=\oo(k)$ and $r=\oo(k)$.
\end{proof}

\section{Concluding Remarks}
\label{sec:concl}
We initiated the parameterized complexity study of the \cp\ problem on tournaments. We showed that it is \FPT\ when parameterized by the solution size and admits a linear vertex kernel. However, the classical complexity status of the problem is still open, i.e, we do not know if it is \NP-hard or not. Resolving the same is a natural future research direction. We conjecture that it is indeed \NP-hard. Note that the classical complexity status of the dual problem (\textsc{Feedback Arc Set in Tournaments}) was a long-standing open problem until it was shown to be \NP-hard \cite{fast-hard-alon,fast-hard,fast-hard3}. 


\end{document}